\newcommand{\ket}[1]{\left| #1 \right\rangle}
\newcommand{\bra}[1]{\left\langle #1 \right|}
\newtheorem{definition}{Definition}
\newtheorem{theorem}{Theorem}
\newtheorem{corollary}{Corollary}
\newtheorem{proposition}{Proposition}
\begin{document}

\title{Exact quantum query complexity of EXACT and THRESHOLD}
\author{\texorpdfstring{Andris Ambainis \and Jānis Iraids \and Juris Smotrovs}{Andris Ambainis, Jānis Iraids, Juris Smotrovs}\\
University of Latvia, Raiņa bulvāris 19, Riga, LV-1586, Latvia}



\maketitle

\bookmarksetup{startatroot}

\begin{abstract}
A quantum algorithm is {\em exact} if it always produces the correct
answer, on any input. Coming up with exact quantum algorithms that 
substantially outperform the best classical algorithm has been 
a quite challenging task.

In this paper, we present two new exact quantum algorithms 
for natural problems: 
\begin{itemize}
\item 
for the problem EXACT$_k^n$ in which we have to determine whether
the sequence of input bits $x_1, \ldots, x_n$ contains exactly $k$ values
$x_i=1$; 
\item
for the problem THRESHOLD$_k^n$ in which we have to determine
if at least $k$ of $n$ input bits are equal to 1.
\end{itemize}
\end{abstract}

\section{Introduction}

We consider quantum algorithms in the query model.
The algorithm needs to compute a given Boolean function
$f:\,\{0,1\}^n\to\{0,1\}$ by querying its input bits until it is able
to produce the value of the function, either with certainty, or
with some error probability. The complexity of the algorithm
is measured as the number of queries it makes (other kinds of computation
needed to produce the answer are disregarded).

In the {\em bounded error} setting where the algorithm is allowed to give an 
incorrect answer with probability not exceeding a given constant $\epsilon$, $0<\epsilon<\frac12$, many efficient quantum algorithms are known, with
either a polynomial speed-up over classical algorithms (e.g.,
\cite{Gro96,Amb07,FGG08,RS08,Bel12}), or, in the case
of partial functions, even an exponential speed-up (e.g., 
\cite{Sim97,Shor97}).

Less studied is the {\em exact} setting where the algorithm
must give the correct answer with certainty. Though for partial functions
quantum algorithms with exponential speed-up are known (for instance,
\cite{DJ92,BH97}), the results for total functions up to recently have
been much less spectacular: the best known quantum speed-up was just by a factor
of 2.

Even more, as remarked in \cite{MJM11},
all the known algorithms achieved this speed-up by the same trick: exploiting
the fact that XOR of two bits can be computed quantumly with one query, while
a classical algorithm needs two queries \cite{DJ92,CEMM98,FGGS98}.

A step forward was made by \cite{MJM11} which presented
a new algorithm achieving the speed-up by a factor of 2,
without using the ``XOR trick''. The algorithm is for the Boolean function
EXACT$_2^4$ which is true iff exactly 2 of its 4 input bits are equal to 1.
It computes this function with 2 queries, while a classical (deterministic)
algorithm needs 4 queries.

This function can be generalized to EXACT$_k^n$ in the obvious way.
Its deterministic complexity is $n$ (due to its sensitivity being $n$, 
see \cite{NS94}). \cite{MJM11} conjectured that its quantum query complexity
is $\max{\{k,n-k\}}$.

In this paper we prove the conjecture.
We also solve the problem for a similar function, THRESHOLD$_k^n$ which
is true iff {\em at least} $k$ of the input bits are equal to 1.
When $n=2k-1$, this function is well-known as the MAJORITY function.
The quantum query complexity of THRESHOLD$_k^n$ turns out to be $\max{\{k,n-k+1\}}$,
as conjectured in \cite{MJM11}.

In a recent work \cite{Amb12}, a function $f(x_1, \ldots, x_n)$ with
the deterministic query complexity $n$ and the exact quantum query complexity
$O(n^{.8675...})$ was constructed. The quantum advantage that is achieved by our 
algorithms is smaller but we think that our results are still interesting, for 
several reasons. 

First, we present quantum algorithms for computational
problems that are natural and simple to describe. Second, our algorithms contain new ideas 
which may be useful for designing other exact algorithms. Currently, the toolbox of
ideas for designing exact quantum algorithms is still quite small. Expanding it is 
an interesting research topic.

\section{Technical Preliminaries}

We denote $[m]=\{1,2,\ldots,m\}$.
We assume familiarity with basics of quantum computation \cite{NC00}.
We now briefly describe the quantum query algorithm model.

Let $f:\,\{0,1\}^n\to\{0,1\}$ be the Boolean function to compute,
with the input bit string $x=x_1x_2\ldots x_n$. The quantum query algorithm
works in a Hilbert space with some fixed basis states. It starts in a
fixed starting state, then performs on it a sequence of unitary transformations
$U_1$, $Q$, $U_2$, $Q$, \dots, $U_t$, $Q$, $U_{t+1}$.
The unitary transformations $U_i$ do not depend on
the input bits, while $Q$, called the {\em query transformation}, does,
in the following way. Each of the basis states corresponds to either one or none
of the input bits. If the basis state $\ket\psi$ corresponds to the $i$-th
input bit, then $Q\ket\psi=(-1)^{x_i}\ket\psi$. If it does not correspond to any
input bit, then $Q$ leaves it unchanged: $Q\ket\psi=\ket\psi$. For convenience
in computations, we denote $\hat x_i=(-1)^{x_i}$.

Finally, the algorithm performs a full measurement in the standard basis. 
Depending on the result of the measurement, it outputs either 0 or 1
which must be equal to $f(x)$.

By the principle of delayed measurement, sometimes a measurement performed 
in the middle of computation is equivalent to it being performed at the end
of computation \cite{NC00}. We will use that in our algorithms, because
they are most easily described as recursive algorithms with the following
structure: perform unitary $U_1$, query $Q$, unitary $U_2$, then measure;
depending on the result of measurement, call a smaller (by 2 input bits) 
instance of the algorithm. The principle of delayed measurement ensures that
such recursive algorithm can be transformed by routine techniques into the
commonly used query algorithm model described above.

The minimum number of queries made by any quantum algorithm computing $f$
is denoted by $Q_E(f)$. We use $D(f)$ to denote the minimum number of
queries used by a deterministic algorithm that computes $f$.

\section{Algorithm for EXACT}

\begin{definition}
The function $EXACT_k^n$ is a Boolean function of $n$ variables being true iff \emph{exactly} $k$ of the variables are equal to $1$.
\end{definition}

\begin{theorem}
\[Q_E(EXACT_k^{2k}) \leq k\]
\end{theorem}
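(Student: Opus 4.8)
The plan is to prove this by induction on $k$, using the recursive structure hinted at in the preliminaries: perform $U_1$, one query $Q$, then $U_2$, measure, and depending on the outcome invoke an instance of the algorithm on $2k-2$ bits computing $EXACT_{k-1}^{2k-2}$. The base case is $k=0$: $EXACT_0^0$ is the constant function true, requiring zero queries, and $k=1$ corresponds to $EXACT_1^2$, which is essentially the XOR-type promise distinguishing task that is known to take one query. For the inductive step, I would assume $Q_E(EXACT_{k-1}^{2k-2}) \le k-1$ and build an algorithm for $EXACT_k^{2k}$ using one extra query plus the recursive call, giving $k$ queries total.

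The key technical device will be to choose $U_1$ so that the single query acts on a superposition $\frac{1}{\sqrt{2k}}\sum_{i=1}^{2k} \ket{i}$ of the $2k$ query-registers, producing the state $\frac{1}{\sqrt{2k}}\sum_i \hat x_i \ket{i}$. Writing $s = \sum_i x_i$ for the Hamming weight, the amplitude pattern of this post-query state depends only on $s$; I then want $U_2$ to rotate this into a form where one particular basis state, say $\ket{0}$, carries amplitude proportional to something like $(2k - 2s)/\sqrt{2k}$ or a related affine function of $s$, so that measuring and seeing $\ket{0}$ versus not-$\ket{0}$ cleanly separates the case $s = k$ from $s \ne k$ — actually, since $EXACT_k^{2k}$ cares about $s=k$, and the natural linear combination of the $\hat x_i$ is $\sum_i \hat x_i = 2k - 2s$ which vanishes exactly when $s=k$, the idea is to use $U_2$ to concentrate amplitude $\propto (2k-2s)$ on a distinguished state. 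When $s=k$ that amplitude is $0$, so we never land there; when $s \ne k$ we land there with nonzero probability but also possibly elsewhere, so a single measurement does not immediately decide the function. This is why the recursion is needed: conditioned on \emph{not} seeing the distinguished outcome, the residual state should encode an $EXACT$-type problem on fewer variables — one deletes two variables (or rather, the measurement outcome tells us the parity/relationship of two coordinates) and recurses on $EXACT_{k-1}^{2k-2}$.

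The main obstacle I anticipate is getting the conditional post-measurement state to be \emph{exactly} a valid input-independent starting configuration for the smaller instance — i.e., ensuring that after the measurement the amplitudes on the remaining registers are, up to a known unitary, precisely $\frac{1}{\sqrt{2k-2}}\sum \hat x_j \ket{j}$ over the correct set of $2k-2$ variables, with no residual phase or normalization dependence on the hidden input beyond what the measurement outcome reveals. Making the branches of the algorithm ``forget'' exactly the right amount of information — enough that the leftover problem is genuinely $EXACT_{k-1}^{2k-2}$ and not some harder promise — will require a careful choice of $U_2$ (presumably built from a reflection or a carefully designed orthogonal transformation on the $2k$-dimensional space) and a verification that the two measurement branches together cover all cases $s \ne k$ correctly. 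A secondary point to handle cleanly is that $EXACT_k^{2k}$ is symmetric under complementing all inputs (it maps $EXACT_k^{2k}$ to itself since $2k - k = k$), which should be reflected in the algorithm and can be exploited to simplify the case analysis. Finally, I would invoke the principle of delayed measurement, as the preliminaries suggest, to convert this recursive measure-and-branch description into the standard $U_1, Q, U_2, Q, \ldots$ form, confirming the query count is exactly $k$.
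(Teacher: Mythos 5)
Your high-level plan coincides with the paper's proof: a recursion from $EXACT_k^{2k}$ to $EXACT_{k-1}^{2k-2}$, one query per level against the uniform superposition $\frac{1}{\sqrt{2k}}\sum_i\ket{i}$, and the observation that $\sum_i\hat x_i=2k-2s$ vanishes exactly when $s=k$, so that a distinguished measurement outcome carrying amplitude proportional to $\sum_i\hat x_i$ certifies $f(x)=0$. But the proposal stops exactly where the content of the theorem lies: you never exhibit the unitary $U_2$, and you explicitly flag its existence (and the correctness of the non-distinguished branches) as an unresolved obstacle. Asserting that some $U_2$ ``concentrates amplitude $\propto(2k-2s)$ on $\ket{0}$'' while leaving the rest in a recursable form is a statement of what must be proved, not a proof; a priori it is not clear that a single unitary can simultaneously do both. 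The paper resolves this with a concrete choice on an enlarged space with basis states $\ket{0}$ and $\ket{i,j}$ ($i<j$): $U_2\ket{i}=\frac{1}{\sqrt{2k}}\left(\sum_{j>i}\ket{i,j}-\sum_{j<i}\ket{j,i}+\ket{0}\right)$. Unitarity is a short computation: for $i\neq i'$ the images overlap only in the $\ket{0}$ component (inner product $+\frac{1}{2k}$) and in the single common pair state, which occurs with opposite signs (inner product $-\frac{1}{2k}$), so the images are orthonormal. The antisymmetric sign pattern is the key missing idea: it makes the amplitude of $\ket{i,j}$ equal to $\frac{\hat x_i-\hat x_j}{2k}$, which is nonzero precisely when $x_i\neq x_j$, so every non-$\ket{0}$ outcome certifies one $0$ and one $1$ among $\{x_i,x_j\}$ and the remaining $2k-2$ bits form a bona fide instance of $EXACT_{k-1}^{2k-2}$.

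A secondary point: the obstacle you worry about most --- that the post-measurement residual state might depend on the input in a way that is not a legal starting configuration for the smaller instance --- does not arise in this design, because the measurement is complete. The state collapses to a single basis state $\ket{i,j}$, the only information carried forward is the classical fact $x_i\neq x_j$, and the sub-algorithm is restarted from scratch on the surviving variables (the principle of delayed measurement then converts the measure-and-branch description into the standard form, as you note). So the fix is not a delicate phase bookkeeping argument but simply writing down $U_2$ and checking, as above, that its images are orthonormal and that every measurement outcome is conclusive: $\ket{0}$ forces output $0$, and $\ket{i,j}$ forces $x_i\neq x_j$.
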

\begin{proof}
We present a recursive algorithm. When $k=0$ the algorithm returns $1$ without making any queries. Suppose $k=m$. For the recursive step we will use basis states
$\ket 0$, $\ket 1$, \dots, $\ket n$ and $\ket{i,j}$ with $i,j\in[2m]$, $i<j$.
The $i$-th input bit will be queried from the state $\ket i$.
We begin in the state $\ket{0}$ and perform a unitary transformation $U_1$:
\[U_1\ket{0} \rightarrow \sum_{i=1}^{2m}{\frac{1}{\sqrt{2m}}\ket{i}}.\]
Next we perform a query:
\[\sum_{i=1}^{2m}{\frac{1}{\sqrt{2m}}\ket{i}} \xrightarrow{Q} \sum_{i=1}^{2m}{\frac{\hat x_i}{\sqrt{2m}}\ket{i}}.\]
Finally, we perform a unitary transformation $U_2$, such that
\[U_2\ket{i} = \sum_{j>i}{\frac{1}{\sqrt{2m}}\ket{i,j}} - \sum_{j<i}{\frac{1}{\sqrt{2m}}\ket{j,i}} + \frac{1}{\sqrt{2m}}\ket{0}\]
One can verify that such a unitary transformation exists by checking the inner products:
\begin{enumerate}[1)]
\item for any $i\in[2m]$,
  \[\bra{i}U_2^\dagger U_2\ket{i} = \sum_{j>i}{\frac{1}{2m}} + \sum_{j<i}{\frac{1}{2m}} + \frac{1}{2m} = 1.\]
\item for any $i,j\in[2m]$, $i\neq j$,
  \[\begin{split}\bra{j}U_2^\dagger U_2\ket{i} = \left(\sum_{l>j}{\frac{1}{2m}\bra{j,l}} - \sum_{l<j}{\frac{1}{2m}\bra{l,j}} + \frac{1}{2m}\bra{0}\right)\cdot \\
  \left(\sum_{l>i}{\frac{1}{2m}\ket{i,l}} - \sum_{l<i}{\frac{1}{2m}\ket{l,i}} + \frac{1}{2m}\ket{0}\right) = 0
    \end{split}
  \]
\end{enumerate}
The resulting quantum state is
\[\sum_{i=1}^{2m}{\frac{\hat x_i}{\sqrt{2m}}\ket{i}} \xrightarrow{U_2} \sum_{i=1}^{2m}{\frac{\hat x_i}{2m}\ket{0}} + \sum_{i<j}{\frac{\hat x_i-\hat x_j}{2m}\ket{i,j}}.\]
If we measure the state and get $\ket{0}$, then $EXACT_m^{2m}(x)=0$. If on the other hand we get $\ket{i,j}$, then $x_i \neq x_j$ and $EXACT_m^{2m}(x) = EXACT_{m-1}^{2m-2}(x\setminus \{x_i, x_j\})$, therefore we can use our algorithm for $EXACT_{m-1}^{2m-2}$.

\end{proof}

Note that we can delay the measurements by using $\ket{i,j}$ as a starting state for the recursive call of the algorithm.

For the sake of completeness, we include the following corollary already given in \cite{MJM11}:
\begin{corollary}
\cite{MJM11}
\[Q_E(EXACT_k^n) \leq \max{\{k, n-k\}}\]
\end{corollary}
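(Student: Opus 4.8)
The goal is to derive the general bound $Q_E(EXACT_k^n) \leq \max\{k, n-k\}$ from the special case $Q_E(EXACT_k^{2k}) \leq k$ just proved. The plan is to reduce the general problem to the balanced case by padding the input with known constant bits, and to handle the symmetry between $k$ and $n-k$ by noting that $EXACT_k^n(x) = EXACT_{n-k}^n(\bar x)$.

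First I would dispose of the case $k \geq n-k$, i.e.\ $2k \geq n$. Here I would extend the input $x_1,\dots,x_n$ by appending $2k-n$ auxiliary bits fixed to the value $0$, obtaining an input $x' = x_1 \dots x_n 0 \dots 0$ of length $2k$. Then exactly $k$ of the bits of $x$ equal $1$ if and only if exactly $k$ of the bits of $x'$ equal $1$, so $EXACT_k^n(x) = EXACT_k^{2k}(x')$. Running the algorithm of Theorem~1 on $x'$ costs $k$ queries, but queries to the appended constant bits are free (the algorithm knows their values), so in fact we compute $EXACT_k^n$ with at most $k = \max\{k,n-k\}$ queries.

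Second, for the case $k < n-k$, I would use the complementation identity: a string $x$ has exactly $k$ ones iff its bitwise complement $\bar x$ has exactly $n-k$ ones, so $EXACT_k^n(x) = EXACT_{n-k}^n(\bar x)$. Since a query to $\bar x_i$ is obtained from a query to $x_i$ simply by the substitution $\hat x_i \mapsto -\hat x_i$ (equivalently, by applying a fixed phase flip $U$ before and after each query, or by negating the relevant basis-state amplitudes), any quantum query algorithm for $EXACT_{n-k}^n$ yields one for $EXACT_k^n$ with the same number of queries. Now $n-k > k$ means $2(n-k) > n$, so the first case applies to $EXACT_{n-k}^n$ and gives a bound of $n-k$ queries; hence $Q_E(EXACT_k^n) \leq n-k = \max\{k,n-k\}$.

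Combining the two cases gives $Q_E(EXACT_k^n) \leq \max\{k,n-k\}$ in all cases. I do not expect any real obstacle here: both reductions — padding with constants and complementing the input — are entirely standard and introduce no extra queries. The only point requiring a sentence of care is the observation that appended constant bits and bit-complementations do not change the query count, which follows immediately from the definition of the query transformation $Q$ in the preliminaries. This is why the corollary is attributed to \cite{MJM11} and included only ``for the sake of completeness.''
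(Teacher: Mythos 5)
Your proof is correct and follows essentially the same route as the paper: reduce to the balanced case $EXACT_m^{2m}$ of Theorem~1 by padding the input with constant bits, which cost no queries. The only cosmetic difference is that the paper pads with ones in the case $2k<n$ (mapping $EXACT_k^n$ to $EXACT_{n-k}^{2(n-k)}$) and dismisses the other case as ``symmetric,'' whereas you pad with zeros in the case $2k\geq n$ and make the symmetry explicit via the complementation identity $EXACT_k^n(x)=EXACT_{n-k}^n(\bar x)$ --- the two reductions coincide up to complementing the input.
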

\begin{proof}
  Assume that $k < \frac{n}{2}$. The other case is symmetric. Then we append the input $x$ with $n-2k$ ones producing $x'$ and call $EXACT_{n-k}^{2n-2k}(x')$. Then concluding that there are $n-k$ ones in $x'$ is equivalent to there being $(n-k)-(n-2k)=k$ ones in the original input $x$.
\end{proof}

The lower bound can be established by the following fact:
\begin{proposition}
\label{subfun}
If $g$ is a partial function such that $g(x)=f(x)$ whenever $g$ is defined on $x$, then $Q_E(g) \leq Q_E(f)$.
\end{proposition}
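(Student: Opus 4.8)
The plan is to observe that an exact quantum algorithm for the larger (total) function $f$ can be reused verbatim as an exact algorithm for the restriction $g$. Concretely, I would start by fixing an optimal exact quantum query algorithm $\mathcal{A}$ for $f$, i.e. one that makes exactly $Q_E(f)$ queries and, on every input $x\in\{0,1\}^n$, outputs $f(x)$ with certainty after the final measurement.

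Next I would argue that $\mathcal{A}$ already works as an exact algorithm for $g$. Let $x$ be any input on which $g$ is defined. Since $\mathcal{A}$ is exact for $f$, running $\mathcal{A}$ on $x$ produces the value $f(x)$ with probability $1$; and by the hypothesis that $g(x)=f(x)$ on the domain of $g$, this value equals $g(x)$. On inputs outside the domain of $g$ the behaviour of $\mathcal{A}$ is irrelevant, since a partial function imposes no requirement there. Hence $\mathcal{A}$ computes $g$ exactly.

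Finally I would conclude that, because $\mathcal{A}$ uses $Q_E(f)$ queries and is a valid exact algorithm for $g$, the minimum number of queries over all exact algorithms for $g$ is at most $Q_E(f)$, i.e. $Q_E(g)\le Q_E(f)$. There is essentially no obstacle here; the only point requiring a word of care is making explicit that the query model and the success criterion for a partial function place constraints solely on inputs in the domain, so that an algorithm correct for $f$ is automatically correct for every such input.
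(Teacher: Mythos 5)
Your proof is correct: restricting an optimal exact algorithm for $f$ to the domain of $g$ immediately yields an exact algorithm for $g$ with the same query count. The paper states this proposition without proof, treating it as an evident fact, and your argument is exactly the standard one it implicitly relies on.
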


\begin{proposition}
\[Q_E(EXACT_k^n) \geq \max{\{k, n-k\}}\]
\end{proposition}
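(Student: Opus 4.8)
The plan is to apply Proposition~\ref{subfun} to two restrictions of $EXACT_k^n$. If we fix the last $n-k$ input bits to $0$, then on the remaining $k$ bits $EXACT_k^n$ is true iff all of them equal $1$, so this restriction is $AND_k$; if instead we fix the last $k$ input bits to $1$, then on the remaining $n-k$ bits $EXACT_k^n$ is true iff all of them equal $0$, so this restriction is $NOR_{n-k}$. Since negating some inputs or the output of a function does not change its exact quantum query complexity, $Q_E(NOR_m)=Q_E(OR_m)$, and Proposition~\ref{subfun} yields $Q_E(EXACT_k^n)\ge\max\{Q_E(AND_k),Q_E(OR_{n-k})\}$. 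Hence it suffices to prove $Q_E(AND_m)=Q_E(OR_m)=m$.

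Only the lower bound $Q_E(OR_m)\ge m$ is nontrivial, and I would prove it with a polynomial-method argument that is sharper than the usual ``degree divided by two''. Consider a hypothetical $t$-query exact algorithm for $OR_m$; expanding $U_{t+1}Q\cdots QU_1$ in the standard way, the final state $\ket{\psi(x)}=\sum_z\alpha_z(x)\ket z$ has every amplitude $\alpha_z$ a polynomial of degree at most $t$ in $\hat x_1,\dots,\hat x_m$. On the input $0^m$ (where $OR_m=0$) the state $\ket{\psi(0^m)}$ is supported entirely on the basis states labelled $0$, whereas for $x\neq 0^m$ the state $\ket{\psi(x)}$ is supported on the basis states labelled $1$; these two subspaces are orthogonal, so $\braket{\psi(0^m)}{\psi(x)}=0$ for every $x\neq 0^m$ on the cube, while $\braket{\psi(0^m)}{\psi(0^m)}=1$. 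Now $g(x):=\braket{\psi(0^m)}{\psi(x)}$ is a fixed linear combination of the $\alpha_z$, hence a polynomial of degree at most $t$ in $\hat x_1,\dots,\hat x_m$; multilinearizing it over the Boolean cube cannot increase its degree and must produce the unique multilinear polynomial that equals $1$ at $\hat x=(1,\dots,1)$ and $0$ at every other $\pm1$ point, namely $2^{-m}\prod_{i=1}^m(1+\hat x_i)$, which has degree exactly $m$. Therefore $t\ge m$. Running the same argument with the all-ones input in place of $0^m$ gives $Q_E(AND_m)\ge m$, which completes the proof.

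The \emph{main obstacle} is precisely this last point: the naive bound $Q_E(f)\ge\deg(f)/2$ only gives $Q_E(EXACT_k^n)\ge n/2$, which is strictly weaker than $\max\{k,n-k\}$ whenever $k\neq n/2$, so one genuinely needs the ``inner product with a fixed final state is a low-degree polynomial'' trick rather than plain degree counting. A self-contained alternative that avoids the reduction runs the same inner-product argument directly, taking a marked input $x^{*}$ of Hamming weight $k$: then $f(x^{*})=1$ while $f$ vanishes on every input of weight $\neq k$, so $g$ multilinearizes to a polynomial that vanishes on all weight-$(\neq k)$ inputs and is nonzero at $x^{*}$; restricting such a polynomial to the subcube where the $1$-coordinates of $x^{*}$ are frozen to $1$ forces degree at least $n-k$, and restricting to the subcube where its $0$-coordinates are frozen to $0$ forces degree at least $k$. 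Either way, establishing that degree lower bound is the crux, and everything else is routine bookkeeping.
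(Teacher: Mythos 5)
Your proof is correct, and its skeleton is the same as the paper's: restrict $EXACT_k^n$ to a subfunction that is (the negation of) $OR$ on $n-k$ bits and invoke Proposition~\ref{subfun}. The paper handles the regime $k>n/2$ by symmetry of $EXACT_k^n$ under complementing the input rather than by also extracting $AND_k$, but that difference is cosmetic. The one substantive divergence is that the paper simply cites \cite{BBC+98} for the fact that an exact quantum algorithm for $OR_m$ needs $m$ queries, whereas you reprove it: your inner-product argument --- the amplitudes are degree-$\le t$ polynomials in $\hat x_1,\dots,\hat x_m$, the overlap $\braket{\psi(0^m)}{\psi(x)}$ is therefore a degree-$\le t$ polynomial whose multilinearization is the indicator of $0^m$ and hence has degree $m$ --- is essentially the original Beals et al.\ proof, correctly reproduced, and your observation that the naive $\deg(f)/2$ bound is insufficient here is exactly the right diagnosis of why the citation (or this sharper argument) is needed. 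Your concluding ``direct'' variant, marking an input $x^{*}$ of weight $k$ and restricting the overlap polynomial to the two subcubes obtained by freezing the ones of $x^{*}$ to $1$ or its zeros to $0$, is also sound and arguably more elegant, since it yields $\max\{k,n-k\}$ in one stroke without any reduction; the price is that it cannot lean on a standard quotable lemma the way the paper's version can.
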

\begin{proof}
  Assume that $k \leq \frac{n}{2}$. The other case is symmetric. Define 
\[g(x_{k+1}, \ldots, x_n) = EXACT_k^n(1, \ldots, 1, x_{k+1}, \ldots, x_n).\]
  Observe that $g$ is in fact negation of the $OR$ function on $n-k$ bits which we know \cite{BBC+98} to take $n-k$ queries to compute. Therefore by virtue of Proposition \ref{subfun} no algorithm for $EXACT_k^n$ may use less than $n-k$ queries.
\end{proof}

\section{Algorithm for THRESHOLD}

We will abbreviate THRESHOLD as $Th$.

\begin{definition}
The function $Th_k^n$ is a Boolean function of $n$ variables being true iff \emph{at least} $k$ of the variables are equal to $1$.
\end{definition}

The function $Th_{k+1}^{2k+1}$ is commonly referred to as $MAJ_{2k+1}$ or $MAJORITY_{2k+1}$ because it is equal to the majority of values of input variables.

Remarkably an approach similar to the one used for $EXACT$ works in this case as well.


\begin{theorem}
\[Q_E(MAJ_{2k+1})\leq k+1.\]
\end{theorem}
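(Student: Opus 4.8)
The plan is to mimic the recursive structure of the $EXACT_k^{2k}$ algorithm: make one query to a uniform superposition over the $2k+1$ input registers, apply a unitary that spreads each $\ket{i}$ over ordered-pair states $\ket{i,j}$, and then measure. After the query the amplitude in $\ket{i}$ is $\hat x_i/\sqrt{2k+1}$, and I want $U_2$ to produce, in each $\ket{i,j}$ (with $i<j$) an amplitude proportional to $\hat x_i - \hat x_j$, so that measuring $\ket{i,j}$ certifies $x_i \neq x_j$ and lets the algorithm recurse on the remaining $2k-1$ bits as $MAJ_{2k-1}$ — since deleting one $0$ and one $1$ preserves which side the majority is on. The base case $k=0$ is $MAJ_1$, the identity on one bit, computed with one query. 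So the query count satisfies $q(k) \le 1 + q(k-1)$ with $q(0)=1$, giving $q(k)\le k+1$.

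The difference with the $EXACT$ case is the dimension count: after one query we have a vector in the $2k+1$-dimensional span of $\ket 1,\dots,\ket{2k+1}$, whereas the pair states $\ket{i,j}$ span a $\binom{2k+1}{2}$-dimensional space. For $U_2$ to be (extendable to) a unitary we only need the $2k+1$ target vectors $U_2\ket i$ to be orthonormal, exactly as in Theorem 1; I would define
\[
U_2\ket i = \sum_{j>i}\alpha\,\ket{i,j} - \sum_{j<i}\alpha\,\ket{j,i} + \beta\,\ket{0}
\]
for suitable real constants $\alpha,\beta$ (with a possible extra ``garbage'' term if needed to fix norms) and verify $\bra i U_2^\dagger U_2\ket i = 1$ and $\bra j U_2^\dagger U_2 \ket i = 0$ for $i\neq j$ by the same inner-product bookkeeping as in the proof of Theorem 1. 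The key question is whether the $\ket 0$-outcome can be made to correctly signal the non-majority-preserving event; unlike $EXACT$, here if all queried bits were equal the function value is already determined, and if the measurement yields $\ket 0$ we need the amplitude there to be nonzero \emph{only} in a case where we can output the answer directly — I expect $\sum_i \hat x_i \neq 0$ always for an odd number of bits, so the $\ket 0$ amplitude $\propto \sum_i \hat x_i$ is never zero and its sign reveals $MAJ_{2k+1}(x)$ outright. That is the crucial structural difference to get right.

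The main obstacle, then, is choosing the constants so that (a) the $U_2\ket i$ are orthonormal — this pins down the ratio of $\alpha$ to $\beta$ and may force the introduction of one auxiliary basis state to absorb leftover norm, and (b) the post-measurement state conditioned on outcome $\ket{i,j}$ is, up to normalization, exactly the correct input state for the $MAJ_{2k-1}$ subroutine, i.e. the amplitudes on the surviving registers are uniform. Point (b) should follow because after deleting coordinates $i,j$ the residual amplitudes $\hat x_\ell/\sqrt{2k+1}$ for $\ell\neq i,j$ are already uniform — but I must make sure $U_2$ does not distort them, which it does not since $U_2$ only acts to create the $\ket{i,j}$ and $\ket 0$ components while we re-use $\ket{i,j}$ itself as the recursive start state (per the remark following Theorem 1, which lets us delay the measurement). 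Finally I would double-check the boundary arithmetic of the recursion — that $MAJ_{2k+1}$ on input with one $0$ and one $1$ removed is genuinely $MAJ_{2(k-1)+1}$ on the remaining $2k-1$ bits — and conclude $Q_E(MAJ_{2k+1}) \le k+1$.
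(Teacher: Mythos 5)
Your recursive scaffold --- one query into a uniform superposition, a $U_2$ routing amplitude into antisymmetric pair states $\ket{i,j}$ with amplitude proportional to $\hat x_i - \hat x_j$, and recursion on $MAJ_{2k-1}$ after discarding one $0$ and one $1$ --- is exactly the paper's, and the count $q(k)\le 1+q(k-1)$, $q(0)=1$ is fine. The genuine gap is in the non-pair measurement outcome, which you yourself flag as the crux. Your proposed $U_2$ sends the leftover amplitude to a single state $\ket 0$ whose amplitude is proportional to $\sum_i \hat x_i$, and you claim that since this sum is never zero for odd $n$, ``its sign reveals $MAJ_{2k+1}(x)$ outright.'' This does not work: when the measurement yields $\ket 0$, the sign of its amplitude is a global phase of the post-measurement state and is not observable. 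All the algorithm learns from outcome $\ket 0$ is that $\sum_i\hat x_i\neq 0$, which holds for every input; and since (with the orthonormality constraints forcing $\alpha=\beta=1/\sqrt{2k+1}$) this outcome occurs with probability $\left(\sum_i\hat x_i\right)^2/(2k+1)^2>0$ on every input, the algorithm gets stuck with positive probability and cannot be exact.

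The paper's fix is different and is precisely the new idea your proposal is missing: instead of a $\ket 0$ component, $U_2\ket i$ carries a component $\sum_{j\neq i}\frac{1}{2k}\ket j$ spread over the \emph{other} singleton states, so that the final amplitude on $\ket i$ is proportional to $\sum_{j\neq i}\hat x_j$ --- a sum of an \emph{even} number ($2k$) of $\pm1$'s. Observing $\ket i$ therefore certifies that this sum is nonzero, i.e.\ that $x\setminus\{x_i\}$ is not balanced; in that case either $x_i$ is in the minority, or it is in the majority with margin at least $2$ among the remaining bits, and in both cases $MAJ_{2k+1}(x)=MAJ_{2k-1}(x\setminus\{x_i,x_j\})$ for an \emph{arbitrary} second bit $x_j$, so the recursion also goes through on this branch. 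Without a device of this kind --- an outcome whose amplitude vanishes exactly on those inputs where discarding two bits would be unsafe --- your construction does not yield an exact algorithm.
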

\begin{proof}
Again, a recursive solution is constructed as follows. The base case $k=0$ is trivial to perform with one query, because the function returns the value of the single variable. The recursive step $k=m$ shares the states, unitary transformation $U_1$ and the query with our algorithm for $EXACT$, but the unitary $U_2$ is slightly different:
\[U_1\ket{0} \rightarrow \sum_{i=1}^{2m+1}{\frac{1}{\sqrt{2m+1}}\ket{i}}.\]
\[\sum_{i=1}^{2m+1}{\frac{1}{\sqrt{2m+1}}\ket{i}} \xrightarrow{Q} \sum_{i=1}^{2m+1}{\frac{\hat x_i}{\sqrt{2m+1}}\ket{i}}.\]
\[U_2\ket{i} = \sum_{j>i}{\frac{\sqrt{2m-1}}{2m}\ket{i,j}} - \sum_{j<i}{\frac{\sqrt{2m-1}}{2m}\ket{j,i}} + \sum_{j\neq i}{\frac{1}{2m}\ket{j}}.\]
The resulting state is
\[\sum_{i=1}^{2m+1}{\frac{\hat x_i}{\sqrt{2m+1}}\ket{i}} \xrightarrow{U_2} \sum_{i=1}^{2m+1}{\sum_{j\neq i}{\frac{\hat x_j}{2m\sqrt{2m+1}}}\ket{i}} + \sum_{i<j}{\frac{(\hat x_i-\hat x_j)\sqrt{2m-1}}{2m\sqrt{2m+1}}\ket{i,j}}.\]
We perform a complete measurement. There are two kinds of outcomes:
\begin{enumerate}[1)]
\item If we get state $\ket{i}$, then either
  \begin{enumerate}[a)]
  \item $x_i$ is the value in the majority which according to the polynomial $\sum_{j\neq i}{\hat x_j}$ not being zero implies that in $x\setminus \{x_i\}$ the number of ones is greater than the number of zeroes by at least 2; or
  \item $x_i$ is a value in the minority.
  \end{enumerate}
In both of these cases, for all $j:j\neq i$ it is true that $MAJ_{2m+1}(x)=MAJ_{2m-1}(x \setminus \{x_i, x_j\})$. Therefore, we can solve both cases by removing $x_i$ and one other arbitrary input value and calculating majority from the remaining values.
\item If we get state $\ket{i, j}$, then it is even better: we know that $x_i \neq x_j$ and therefore $MAJ_{2m+1}(x)=MAJ_{2m-1}(x\setminus \{x_i, x_j\})$.
\end{enumerate}
\end{proof}

\begin{corollary}
If $0 < k < n$, then
\[Q_E(Th_k^n)\leq \max{\{k, n-k+1\}}.\]
\end{corollary}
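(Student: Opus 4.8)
The plan is to reduce an arbitrary $Th_k^n$ to a majority instance by padding, exactly as in the $EXACT$ corollary. First I would split into two cases depending on whether $k$ is at most or at least $\frac{n+1}{2}$, noting that the two cases are mirror images under the substitution $x_i \mapsto 1-x_i$ (which turns $Th_k^n$ into the negation of $Th_{n-k+1}^n$). So it suffices to treat the case where $n-k+1 \ge k$, i.e. $k \le \frac{n+1}{2}$, and show $Q_E(Th_k^n) \le n-k+1$.

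In that case the idea is to append ones to the input so that the threshold lands exactly in the middle of the padded sequence. Concretely, suppose we append $a$ ones to $x$, obtaining $x'$ of length $n+a$; having at least $k$ ones in $x$ is equivalent to having at least $k+a$ ones in $x'$, so $Th_k^n(x) = Th_{k+a}^{n+a}(x')$. To make the right-hand side a majority function on an odd number of bits we need $n+a = 2(k+a)-1$, i.e. $a = n-2k+1$; since $k \le \frac{n+1}{2}$ this $a$ is a nonnegative integer, and then $n+a = 2(n-k)+1$ is odd with $k+a = n-k+1$, so $Th_k^n(x) = MAJ_{2(n-k)+1}(x')$. Applying the preceding theorem with parameter $k' = n-k$ gives $Q_E(MAJ_{2(n-k)+1}) \le (n-k)+1$, and by Proposition~\ref{subfun} (the padded function restricted to the appropriate subcube agrees with $Th_k^n$, so $Th_k^n$ is a subfunction of $MAJ_{2(n-k)+1}$) we conclude $Q_E(Th_k^n) \le n-k+1$. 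Combining the two symmetric cases yields $Q_E(Th_k^n) \le \max\{k, n-k+1\}$.

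There is essentially no obstacle here — the argument is a routine padding reduction, so the only thing to be careful about is the bookkeeping: checking that $a = n-2k+1 \ge 0$ in the relevant case (which uses $0 < k$ only to keep the problem nontrivial, and $k \le \frac{n+1}{2}$ to keep $a$ nonnegative), that the parities work out so the padded instance is genuinely a majority of an odd number of bits, and that the symmetric case indeed produces the bound $k$ rather than $n-k+1$. One should also double-check the boundary values of $k$ (the hypothesis $0 < k < n$ guarantees both $k$ and $n-k+1$ are at least $1$, so the base case $k'=0$ of the majority theorem is never silently required to do something degenerate). The main "content" was already spent in the theorem on $MAJ_{2k+1}$; this corollary is just the wrapper that deploys it.
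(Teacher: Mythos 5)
Your proposal is correct and follows essentially the same route as the paper: pad the input with $n-2k+1$ ones to turn $Th_k^n$ into $MAJ_{2n-2k+1}$ and invoke the majority theorem, handling the other case by symmetry. The extra bookkeeping you flag (nonnegativity of the padding, parity, the complemented symmetric case) all checks out.
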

\begin{proof}
  Assume that $k \leq \frac{n}{2}$. The other case is symmetric. Then we append the input $x$ with $n-2k+1$ ones producing $x'$ and call $MAJ_{2n-2k+1}(x')$. Then $x'$ containing at least $n-k+1$ ones is equivalent to $x$ containing at least $(n-k+1)-(n-2k+1)=k$ ones.
\end{proof}


\begin{proposition}
\[Q_E(Th_k^n) \geq \max{\{k, n-k+1\}}\]
\end{proposition}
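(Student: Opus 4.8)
The plan is to mimic the lower-bound argument used for $EXACT$, relying on Proposition~\ref{subfun} together with the known exact quantum query complexity of $OR$. To get the bound $Q_E(Th_k^n)\ge n-k+1$, I would restrict $Th_k^n$ by fixing $k-1$ of its input variables to $1$. The resulting partial function $g(x_k,\ldots,x_n) = Th_k^n(1,\ldots,1,x_k,\ldots,x_n)$ is true iff at least one of the remaining $n-k+1$ variables equals $1$, i.e.\ $g = OR_{n-k+1}$. Since $Q_E(OR_m) = m$ by \cite{BBC+98}, Proposition~\ref{subfun} gives $Q_E(Th_k^n)\ge n-k+1$.

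For the bound $Q_E(Th_k^n)\ge k$, I would use the symmetric restriction: fix $n-k$ of the variables to $0$, obtaining $h(x_1,\ldots,x_k) = Th_k^n(x_1,\ldots,x_k,0,\ldots,0)$, which is true iff all $k$ remaining variables equal $1$, i.e.\ $h = AND_k$. Since $AND_k$ is the negation of $OR_k$ and negating the output does not change exact quantum query complexity, $Q_E(AND_k) = Q_E(OR_k) = k$, so again by Proposition~\ref{subfun} we get $Q_E(Th_k^n)\ge k$. Combining the two restrictions yields $Q_E(Th_k^n)\ge\max\{k,n-k+1\}$, matching the upper bound from the preceding corollary.

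I expect no serious obstacle here; the only points requiring a word of care are (i) checking that the restrictions are legitimate — i.e.\ that $g$ and $h$ are genuine subfunctions of $Th_k^n$ obtained by fixing input bits, so Proposition~\ref{subfun} applies — and (ii) invoking that $Q_E$ is invariant under negating the output, which is immediate since one can simply swap the two measurement outcomes at the end of any algorithm. Both restrictions require $0<k<n$ for the subfunctions to be nontrivial ($OR$ and $AND$ on at least one variable), which is exactly the hypothesis under which the matching upper bound was stated, so the characterization $Q_E(Th_k^n)=\max\{k,n-k+1\}$ is complete in that range.
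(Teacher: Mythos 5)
Your proof is correct and follows essentially the same route as the paper: restrict $Th_k^n$ to obtain $OR_{n-k+1}$ (and, for the other bound, its De Morgan dual $AND_k$), then invoke the $OR$ lower bound of \cite{BBC+98} via Proposition~\ref{subfun}; the paper simply dismisses the $AND_k$ case as ``symmetric'' where you spell it out. (One pedantic note: $AND_k$ is the negation of $OR_k$ \emph{with negated inputs}, not of $OR_k$ itself, but both input and output negation preserve $Q_E$, so the conclusion stands.)
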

\begin{proof}
  Assume that $k \leq \frac{n}{2}$. The other case is symmetric. Define 
\[g(x_k, x_{k+1}, \ldots, x_n) = Th_k^n(1, \ldots, 1, x_k, x_{k+1}, \ldots, x_n).\]
Observe that $g$ is in fact the $OR$ function on $n-k+1$ bits which we know \cite{BBC+98} takes $n-k+1$ queries to compute. Therefore by virtue of Proposition \ref{subfun} no algorithm for $Th_k^n$ may use less than $n-k+1$ queries.
\end{proof}

\section{Conclusion}
Coming up with exact quantum algorithms that are substantially better than any 
classical algorithm has been a difficult open problem. Until a few months ago, no example of
total Boolean function with $Q_E(f)<D(f)/2$ was known and the examples of functions with $Q_E(f)=D(f)/2$
were almost all based on one idea: applying 1-query quantum algorithm for $x_1\oplus x_2$
as a subroutine.

The first exact quantum algorithm with $Q_E(f)<D(f)/2$ (for a total $f$) was constructed in
\cite{Amb12}. However, no symmetric function with $Q_E(f)<D(f)/2$ is known. It has been proven that if $f(x)$ is a symmetric, non-constant function of $n$ variables, then $Q_E(f)\geq n/2-o(n)$ \cite{ZGR97,BdW02}.

In this paper, we construct exact quantum algorithms for two symmetric functions: $EXACT$ and $THRESHOLD$. Both of those algorithms achieve $Q_E(f) = D(f)/2$
(exactly or in the limit) and use new ideas. At the same time, our algorithms are quite simple and easy to understand.

The main open problem is to come with more algorithmic techniques for constructing exact quantum algorithms. Computer experiments via semidefinite optimization \cite{MJM11} show that there are many functions for which exact quantum algorithms are better
than deterministic algorithms. Yet, in many of those case, the only way to construct 
these algorithms is by searching the space of all quantum algorithms, using
semidefinite optimization as the search tool.

For example, from the calculations in \cite{MJM11} (based on semidefinite optimization) it is apparent that there are 3 symmetric functions of 6 variables for which $Q_E(f)=3$: $PARITY$, $EXACT_3^6$ and $EXACT_{2,4}^6$ (exactly 2 or 4 of 6 variables are equal to 1). 

Unlike for the first two functions, we are not aware of any simple quantum algorithm or lower bounds for $EXACT_{2,4}^6$. Based on the evidence from semidefinite optimization, we conjecture that if $n$ is even and $2k<n$ then the quantum query complexity of $EXACT_{k,n-k}^n$ is $n-k-1$. In particular, this would mean that the complexity of
$EXACT_{n/2-1, n/2+1}^n$ is $\frac{n}{2}$ and this function also achieves a gap of 
$Q_E(f)=D(f)/2$.

At the moment, we know that this conjecture is true for $k=0$ and $k=1$. Actually, 
both of those cases can be solved by a classical algorithm which uses the 1-query algorithm 
for $x_1\oplus x_2$ as a quantum subroutine. This approach fails for $k\geq 1$ and it seems
that the approach in the current paper is also not sufficient --- without a substantial new component.

\phantomsection
\addcontentsline{toc}{chapter}{References}

\end{document}